\newcommand{\floor}[1] {\lfloor #1\rfloor}
\newcommand{\peer}{\text{PeerNomination}\xspace}
\newcommand{\extpeer}{\text{\textsc{WeightedPeerNomination}}\xspace}
\begin{document}
\title{Peer Selection with Noisy Assessments}

\author{Omer Lev\inst{1} \and Nicholas Mattei\inst{2} \and Paolo Turrini\inst{3} \and 
Stanislav Zhydkov \inst{3}}
\authorrunning{O. Lev, N. Mattei, P. Turrini and S. Zhydkov}

\institute{Ben-Gurion University, Israel \\\email{omerlev@bgu.ac.il} 
\and 
Tulane University, USA \\\email{nsmattei@tulane.edu}
\and 
University of Warwick, UK \\\email{\{p.turrini, s.zhydkov\}@warwick.ac.uk}
}
\maketitle              %

\begin{abstract}

In the peer selection problem a group of agents must select a subset of themselves as winners for, e.g., peer-reviewed grants or prizes. Here, we take a Condorcet view of this aggregation problem, i.e., that there is a ground-truth ordering over the agents and we wish to select the best set of agents, subject to the noisy assessments of the peers. Given this model, some agents may be unreliable, while others might be self-interested, attempting to influence the outcome in their favour. In this paper we extend \peer, the most accurate peer reviewing algorithm to date, into \extpeer, which is able to handle noisy and inaccurate agents. To do this, we explicitly formulate assessors' reliability weights in a way that does not violate strategyproofness, and use this information to reweight their scores. We show analytically that a weighting scheme can improve the overall accuracy of the selection significantly. Finally, we implement several instances of reweighting methods and show empirically that our methods are robust in the face of noisy assessments.
\end{abstract}

\section{Introduction}\label{sec:intro}

The results of the 2016 NeurIPS experiment \cite{shah2018design} and other studies of bias in evaluative processes \cite{wang2020debiasing,stelmakh2020catch} have brought to the fore the extent of noisy assessments in conference reviewing. When agents evaluate each other to select a subset of themselves -- the peer selection problem -- various factors can come into play that hinder accurate assessment, including time pressure and strategic behaviour. Finding high quality mechanisms for peer review is a critical step in helping to ease reviewing in a number of areas, including large conferences \cite{Aziz2019}, grant reviewing \cite{MerrifieldSaari}, and online courses \cite{Walsh14}.

Recently, researchers in algorithmic game theory and computational social choice have devised accurate and strategyproof algorithms for peer selection, such as Partition \cite{AFPT11a}, Credible Subset \cite{KLMP15}, Exact Dollar Partition (EDP) \cite{Aziz2019} and most recently \peer \cite{PeerNomination}, each improving upon the state of the art. These algorithms take a Condorcet view on this aggregation problem, i.e., that there is a ground-truth over the agents, and we wish to select as many of the top ranked agents as possible \cite{Xia19}. However, none of the existing algorithms seek to alleviate the problem of noisy inputs in a unified, strategyproof mechanism. When earlier work did engage with noisy reports, it was limited to empirical testing with relatively low noise, e.g., a Mallows model with $\varphi=0.5$ (see  \cite{Aziz2019} or \cite{PeerNomination}), which yields fairly minor changes in agents' reports. We are concerned with algorithms that can handle a significant level of noise, while maintaining strategyproofness and high quality of selection, an important missing aspect in the peer reviewing literature.

Ideally, we want an algorithm that is capable of identifying inaccurate reviewers and reducing their influence on selection, using their own assessments as a guide. We could, for example, try and downgrade those reviewers that differ too much from others. However, there are two problems with this approach: first, the noise may be such that it is difficult to establish what the consensus actually is; and second, that this meta-level reweghting can be exploited strategically. Simple reweghting is not strategyproof. Consider an agent $a$ that is harshly reviewing agent $b$, with both $a$ and $b$ reviewing a third agent $c$. Agent $b$ could benefit by reviewing agent $c$ in a way that would present agent $a$ as one whom others disagree with, potentially lowering the impact of the report of agent $a$ on agent $b$ if weights are computed based on correlations to the evaluations of others, as done by \cite{MerrifieldSaari}. On the other hand, if a mechanism is able to identify agent $b$ as a source of noise, it can increase the overall quality of the selection. While one can reweigh agents without maintaining strategyproofness \cite{Walsh14,NIPS2009_f899139d}, we wish to achieve increased selection quality \emph{and} strategyproofness.

\smallskip
\textbf{Contribution.} We extend \peer, the most accurate peer reviewing algorithm to date, into \extpeer, to handle noisy and inaccurate agents. To do so we explicitly formulate assessors' reliability weights in a way that does not violate strategyproofness, and use this information to reweigh their scores.  \extpeer is able to handle significant levels of noise, even when reviewers act adversarially. We show analytically that a weighting scheme can improve the overall accuracy of the selection significantly. We then implement several instances of reweghting protocols and show empirically that our novel methods are able to significantly improve the quality of peer selection over \peer, under a variety of noise parameters.

\section{Related Literature}

Using the evaluations of peers to rank and select winners is a problem of broad interest beyond CS and AI, including numerous practical domains, e.g., conference, journal, and grant reviewing; large scale course grading, and group decision making. Brought to the fore by \cite{MerrifieldSaari} to allocate telescope time, the problem is deeply rooted in economics, from the work of \cite{Dollar} on ``dollar partition", extended by \cite{Aziz2019} to the Dollar Raffle and Dollar Partition methods.
Other notable algorithms include the Credible Subset method \cite{KLMP15},
where the protocol examines the possibility of manipulations and accounts for it. Despite strategyproofness, the system was shown to yield significant number of cases where no proposal was funded in the end \cite{Aziz2016}.

Two more prominent recent algorithms are Exact Dollar Partition (EDP) \cite{Aziz2019} which provides exactness at the cost of some randomness, while remaining strategyproof, and improving on earlier algorithms \cite{Aziz2019}. The second is \peer \cite{PeerNomination}, which improves upon EDP, while requiring reviewers to submit only approval based rankings, though at the cost of some exactness (details in Section~\ref{prem}).

Other developments in the multi-agent systems communities include voting rules to aggregate ranks, e.g., $k$-Partite  \cite{KKKKP18a}, the Committee Rule \cite{KKKKP18a}, and Divide-and-Rank \cite{XZSS19} algorithms. Others focus on proving bounds on the quality of a given rank aggregation scheme under noisy and partial observations \cite{DBLP:conf/atal/CaragiannisKV15}. Yet other methods are approval-based but focus on single agent selection: Permutation~\cite{FeKl14a} and Slicing~\cite{BNV14a}.

A key application area for peer evaluation mechanisms is education, where the problems of reviewer reliability and bias have been extensively studied \cite{DBLP:conf/edm/PiechHCDNK13}. We are motivated by evidence from fielded peer evaluation mechanisms showing that students are often unwilling to strictly rank assignments \cite{DBLP:conf/sigcse/AlfaroS14} and would rather rely on scores or pass/fail marks (approvals). Within the conference and journal reviewing ecosystem there is also growing interest in detecting strategic behaviour on the part of the reviewers \cite{stelmakh2020catch,Meir:MarkedBidding} as well as de-biasing and calibrating differences in the scores of reviewers \cite{wang2019your,LianMNW18}. We go beyond calibration and de-biasing, identifying  suboptimal behaviour in agents' populations and looking at the effect of rescaling on the system as a whole.

Outside peer selection, there is extensive work in the machine learning, information retrieval, and preference learning communities on the \emph{learning to rank} problem: inferring the most likely ranking from possibly noisy observations \cite{liu2011learning}. These works include learning noise models, e.g., the parameters of a Mallows model, for use in inferring latent preferences of agents \cite{liu2011learning,Xia19}. This is of great practical interest in information retrieval, where one wishes to rank, e.g., web-pages based off user clicks \cite{schnabel2016unbiased} and in combining labellings from multiple sources for the construction of datasets \cite{NIPS2009_f899139d}.  However, all of these systems do not concern themselves with strategyproofness, a key focus of our study.

\section{Preliminaries}\label{prem}

In our setup, a set of agents $\mathcal{N} = \{ 1, 2, ..., n \}$ aim to select a subset of themselves of size $k$. We assume that, if agents were to assess each other accurately, they would report the same ranking. We refer to this ranking as the {\em ground truth}, a standard assumption in Condorcet views of voting \cite{Xia19}. In practical applications, it is often not feasible for agents to review all others, therefore we assume, for simplicity, that each agent reviews $m$ agents and is reviewed by $m$ of them. 
We represent such an $m$-regular graph via an assignment function $A: \mathcal{N} \rightarrow 2^{\mathcal{N}}$ and denote $i$'s review pool, the agents reviewed by $i$, as $A(i)$. Let $A^{-1}(i)$ denote the set of agents that review agent $i$. In real-world settings, $m$ is typically a small constant w.r.t. $n$.

Since we assume the ground truth to be a linear order, we also assume the belief of each reviewer over their review pool to be a strict ranking. Formally, we represent the underlying ranking of reviewer $i$ as an injective function $\sigma_i: A(i) \rightarrow \{1, ..., m \}$, where $\sigma_i(j)$ is the rank given by reviewer $i$ to agent $j$. The collection of these reviews is called a review \textit{profile} and is denoted by $\sigma$. However, as in \cite{PeerNomination}, we take a less demanding approach and require reviewers to only report \emph{approvals} to the mechanism, rather than a strict ranking. Formally, we instantiate each underlying belief $\sigma_i$ by $\sigma_i^\textrm{app}: A(i) \rightarrow \{0, 1, \alpha\}$, where $1$ represents approval and $\alpha \in [0, 1)$ is a constant representing ``partial" approval. Similarly, $\sigma_i^\textrm{app}(j)$ is the approval score given by reviewer $i$ to reviewer $j$. 

\paragraph{\peer.}
In \peer, each agent, independently and concurrently, {\em nominates for certain} $\lfloor \frac{k}{n} m \rfloor$ of their reviewees, i.e., setting $\sigma_i^\textrm{app}(j) = 1$. In addition, they {\em probabilistically nominate} exactly one other agent, with probability $\frac{k}{n} m - \lfloor \frac{k}{n} m \rfloor$, i.e., setting $\sigma_i^\textrm{app}(j) = \frac{k}{n} m - \lfloor \frac{k}{n} m \rfloor$. In other words, probabilistic nominations become certain nominations with probability $\frac{k}{n} m - \lfloor \frac{k}{n} m \rfloor$.  These partial nominations are used to maintain the size requirements as discussed by \cite{PeerNomination}. An agent $j$ is then selected iff they are nominated by the majority of their reviewers. Since each agent is considered independently for selection, the algorithm is not guaranteed to return exactly $k$ agents, though it closely approximates $k$ \cite{PeerNomination}.

\subsection{Noise Model} \label{sec:mallows}

To model the inaccuracies in reviewers' assessments, we assume that each agent receives a noisy observation of the ground truth according to a Mallows model \cite{Mal57}. Mallows models have been widely used to compare the performance of peer selection algorithms empirically \cite{PeerNomination,Aziz2019}, but so far only studied for very mild levels of noise, e.g., $\varphi=0.5$. 

The Mallows model is parameterised by a dispersion parameter $\varphi \in [0, 1]$ and a reference linear ranking $R$. Given $R$ and $\varphi$, the model induces a probability distribution over all permutations of $R$ such that the probability of the linear order $R^\prime$ is $\pi(R^\prime) \propto \varphi^{KT(R, R^\prime)} $, where $KT(R, R^\prime)$ is the Kendall-$\tau$ distance between $R$ and $R'$. The Kendall-$\tau$ distance counts the number of pairwise disagreements between two rankings. In other words, the probability to find an additional pairwise disagreement from the reference ranking decreases exponentially. Note that, as we vary the dispersion parameter $\varphi$ from $0$ to $1$, the probability distribution over all linear rankings moves from being concentrated on $R$ to being uniform overall possible rankings. 

In our simulations (Section \ref{sec:experiments}), we take the ground truth as the reference ranking and sample a noisy ranking for each agent using the $\varphi$ specified. An important feature of Mallows model is that it can be sampled efficiently \cite{lu2011learning,Xia19}, which allows us to generate a unique reviewer profile for each experiment.
 
In addition, we test our weighting schemes in settings where some reviewers are not just random, but are actively adversarial. We thus extend the range of the dispersion parameter $\varphi$ to $[0, 2]$, where $\varphi^\prime \in (1, 2]$ means that the ranking is sampled using the \textit{inverse} of the ground truth as the reference ranking and $\varphi = 2 - \varphi^\prime$. Thus, the distribution moves smoothly from being concentrated at the ground truth to the inverse ground truth while still being uniform around $1$.

It is important to note that the Mallows model does not produce errors in reviews proportionally to $\varphi$. Suppose that agents have to nominate top 3 out of 9 individuals and their beliefs are given by the Mallows model. With $\varphi=0.5$ only a small number of agents are even going to commit 1 error and we need to increase $\varphi$ to around 0.95 to have any meaningful probability of getting 2/3 nominations wrong. Only under our adversarial extension we start seeing agents that get all 3 nominations wrong. Figure \ref{fig:fp_errors} illustrates this relationship of errors and $\varphi$.

\begin{figure}
    \centering
    \includegraphics[width=0.7\linewidth]{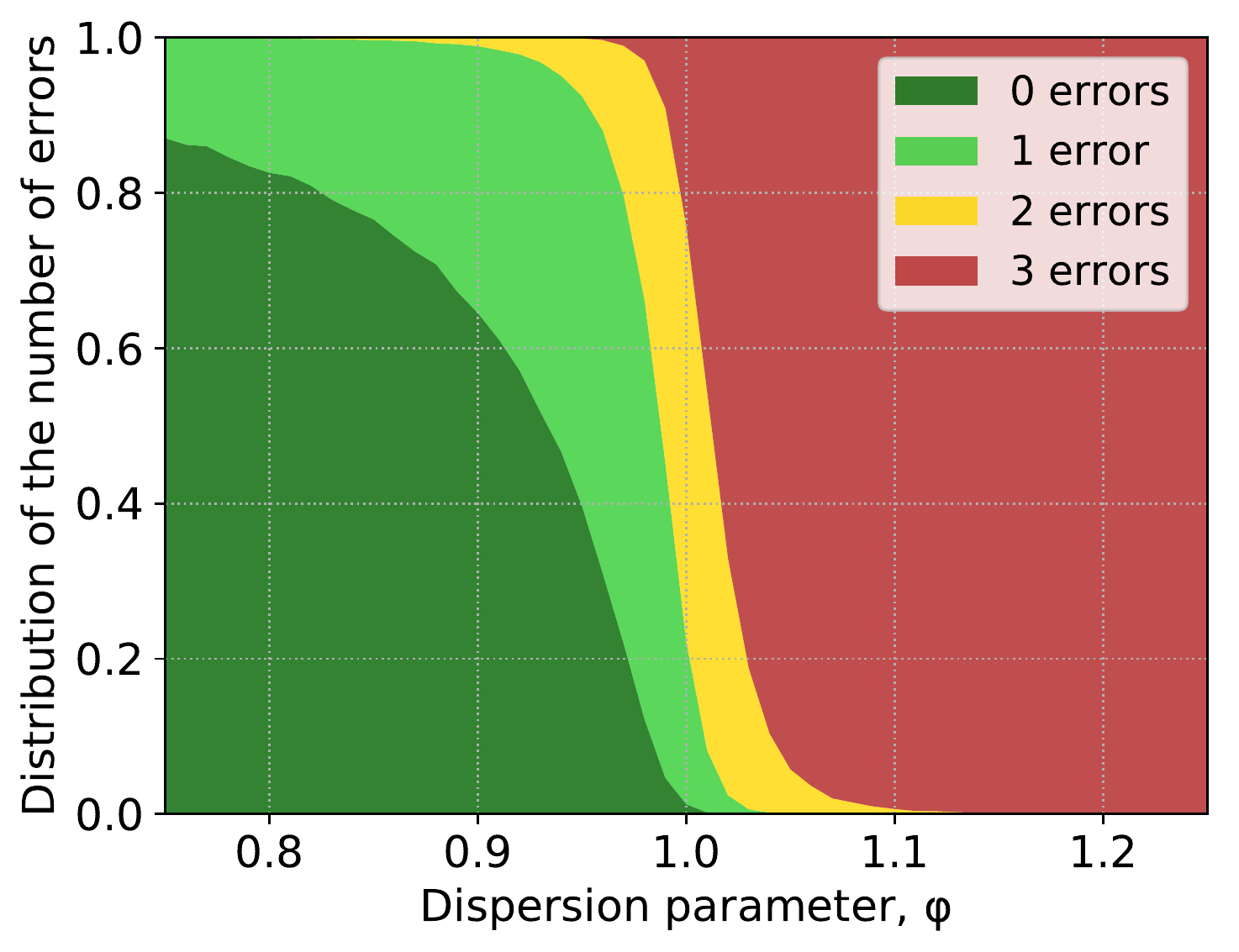}
    \caption{Nomination errors committed by agents as a function of $\varphi$. Each vertical slice shows the the distribution of the number of errors among the 200 agents in a simulation where each agent had to nominate 3 individuals out of 9.}
    \label{fig:fp_errors}
\end{figure}

\subsection{Properties}

Our mechanism, like most, maintains the properties of \emph{anonymity}, permuting agents makes no difference; \emph{non-imposition}, any set of $k$ accepted papers is a possible output; and \emph{monotonicity}, if agent $j$ was selected, and some agent $i$ increased their score for it, agent $j$ will still be selected.

We focus on \emph{strategyproofness}, i.e., no agent is better off by reporting a non-truthful review. We do so by showing no agent's reviews have any influence over their own selection, and hence there is no incentive to misreport; and on ensuring the algorithm's quality. We measure the quality by two key parameters: {\em recall}, the proportion of true positives in the selected set\footnote{Accuracy, which is often paired with recall, is less relevant in our setting as we have a fixed and often small proportion of positives (which is equal to $k/n$) in the population.}; and {\em size}, i.e., the number of selected agents.

\section{Weighted PeerNomination}\label{sec:model}

\extpeer extends \peer \cite{PeerNomination} by adjusting a reviewer's weight as a function of the overall quality of their reviews. This design choice is based on the intuition that in many cases, agents' reviews are not independent when it comes to quality but, rather, correlated. \peer is a special case of \extpeer with weights $(1,1,\ldots,1)$.

The new algorithm can be separated into three parts:
\begin{description}
\item[Assignment.] Determining which agents review each agent (Algorithm~\ref{alg:eulerAlg}).
\item[Weights.] Applying a reweghting protocol to adjust those reviews (Section~\ref{sec:weightSchemes}).
\item [Choice.] Choosing, based on reweighed reviews, which agents to select (Algorithm~\ref{alg:extpeer}).
\end{description}

\begin{algorithm}[t!]
\begin{algorithmic}{}
\small
\REQUIRE Assignment $A$, review profile $\sigma$, target quota $k$, slack parameter $\varepsilon$, reviewer weights $\{w_1, ..., w_n \}$
\ENSURE Accepting set $S$
\STATE Set $\textit{nomQuota} := \frac{k}{n} m +\varepsilon$
\FORALL{$j$ in $\mathcal{N}$}
 \STATE Initialise \textit{nomCount} := 0
 \FORALL{$i \in A^{-1}(j)$}
 \IF{$\sigma_i(j) \leq \lfloor \textit{nomQuota} \rfloor$}
 \STATE increment \textit{nomCount} by $w_i$
 \ELSIF{$\sigma_i(j) = \lfloor \textit{nomQuota} \rfloor + 1$}
 \STATE increment \textit{nomCount} by $w_i$ with probability $\textit{nomQuota} - \lfloor \textit{nomQuota} \rfloor$
 \ENDIF
 \ENDFOR
 \IF{$\textit{nomCount} \geq (\sum_{i \in A^{-1}(j)} w_i)/2$}
 \STATE $S \leftarrow j$
 \ENDIF
\ENDFOR
\RETURN $S$
\end{algorithmic}
\caption{\extpeer \textsc{Choice}}
\label{alg:extpeer}
\end{algorithm}{}

As illustrated in Figure~\ref{fig:sp_diagram} (Left), the weighting scheme introduces the possibility of breaking strategyproofness: if agent $a$ and agent $b$ are both reviewing agent $c$, and agent $a$ is also reviewing agent $b$, agent $b$ may impact the weight given to agent $a$ with their review to agent $c$, thus influencing agent $a$'s role in determining if agent $b$ themself is selected. 
Therefore, we complement the modified selection algorithm (Algorithm~\ref{alg:extpeer}), with a method to assign agents to reviews that does not break strategyproofness, based on the Euler cycle (Figure~\ref{fig:sp_diagram} (Right)). This algorithm for assignment is presented as Algorithm~\ref{alg:eulerAlg}. We separate the two algorithms as they do not depend on each other. Any assignment algorithm that does not create a case in which an agent can both review-with and be reviewed-by the same agent can run Algorithm~\ref{alg:extpeer} and maintain strategyproofness.

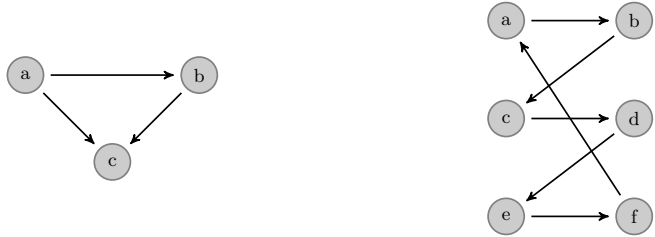
\begin{figure}[h]
 \begin{subfigure}{0.49\linewidth}
 \centering
 \scalebox{0.8}{
 \begin{tikzpicture}
 [auto,
 agent/.style={circle, draw=black!50, fill=black!20,thick, node distance=1 and 1, inner sep = 0mm, minimum size = 6mm},
 valid/.style={->, shorten <=5pt, shorten >=5pt, >=stealth', thick, draw=stasgreen, fill=stasgreen},
 invalid/.style={->, shorten <=5pt, shorten >=5pt, >=stealth', thick, draw=red, fill=red},
 to/.style={->, shorten >=3pt, shorten <=3pt, >=stealth', thick}]
 	\node[agent] (a) 			 {a};
 \node[agent] (c) [below right = of a] {c};
 \node[agent] (b) [above right = of c] {b};
 
 \draw[to]
 (a) edge (b)
 (a) edge (c)
 (b) edge (c)
 ;
 \end{tikzpicture}
}
 \label{triangleFig}
 \end{subfigure}
 \begin{subfigure}{0.49\linewidth}
 \centering
 \scalebox{0.8}{
 \begin{tikzpicture}
 [auto,
 agent/.style={circle, draw=black!50, fill=black!20,thick, node distance=1 and 1.5, inner sep = 0mm, minimum size = 6mm},
 valid/.style={->, shorten <=5pt, shorten >=5pt, >=stealth', thick, draw=stasgreen, fill=stasgreen},
 invalid/.style={->, shorten <=5pt, shorten >=5pt, >=stealth', thick, draw=red, fill=red},
 to/.style={->, shorten >=3pt, shorten <=3pt, >=stealth', thick}]
 	\node[agent] (a) 			 {a};
 \node[agent] (c) [below = of a] {c};
 \node[agent] (e) [below = of c] {e};
 \node[agent] (b) [right = of a] {b};
 \node[agent] (d) [right = of c] {d};
 \node[agent] (f) [right = of e] {f};
 
 \draw[to]
 (a) edge (b)
 (b) edge (c)
 (c) edge (d)
 (d) edge (e)
 (e) edge (f)
 (f) edge (a)
 ;
 \end{tikzpicture}
}
 \label{EulerFig}
 \end{subfigure}
\caption{Algorithm~\ref{alg:eulerAlg} avoids non strategyproof assignments (Left), outputting strategyproof instances (Right).}
 \label{fig:sp_diagram}
 
\end{figure}

Weighting schemes may introduce various strategyproofness violations (as in \cite{Walsh14}), so here we constrain them to be operators of the form $w_{i}:\mathbb{R}^{m} \times \mathbb{R}^{(m-1)\times m}\rightarrow \mathbb{R}$ for each agent $i$. These functions, which in our case will be identical for each agent, associate to each agent $i$'s reviews the values of the other $m-1$ reviews on those same reviewees. Under this condition we can state:

\begin{theorem}
\extpeer is strategyproof.
\end{theorem}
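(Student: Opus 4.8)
The plan is to establish strategyproofness by proving the key structural claim that \textbf{no agent's own reviews can influence their own selection}, which the excerpt identifies as the definition of the property being targeted. Strategyproofness here means an agent cannot improve their chance of being selected by misreporting; so it suffices to show that the computation determining whether agent $j$ is accepted is \emph{entirely independent} of the reviews submitted by $j$. If this holds, then $j$ is indifferent between truthful and untruthful reporting as far as their own outcome is concerned, and the mechanism is strategyproof.

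First I would unwind the \textsc{Choice} procedure (Algorithm~\ref{alg:extpeer}) to identify exactly which data enter the acceptance test for a fixed agent $j$. The test compares $\textit{nomCount}$ against $(\sum_{i \in A^{-1}(j)} w_i)/2$, where the sum ranges over the reviewers of $j$. The quantity $\textit{nomCount}$ depends on the reviews $\sigma_i(j)$ for $i \in A^{-1}(j)$ and on the weights $w_i$ of those reviewers. Crucially, $j \notin A^{-1}(j)$ under any reasonable assignment (an agent does not review themselves), so $j$'s own review scores $\sigma_j(\cdot)$ never appear directly in either the count or the threshold. The only remaining channel through which $j$ could interfere is \emph{indirectly}, via the weights $w_i$ of $j$'s own reviewers: if one of $j$'s reviewers $i \in A^{-1}(j)$ happens also to be reviewed by $j$ (i.e.\ $i \in A(j)$), then $j$'s report $\sigma_j(i)$ might feed into the computation of $w_i$, since the weighting operator $w_i:\mathbb{R}^{m} \times \mathbb{R}^{(m-1)\times m}\rightarrow \mathbb{R}$ aggregates the reviews of \emph{all} assessors of $i$'s reviewees, which can include $j$.

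The heart of the argument, then, is to rule out precisely this indirect channel, and this is where the Euler-cycle assignment (Algorithm~\ref{alg:eulerAlg}, illustrated in Figure~\ref{fig:sp_diagram}) does the work. I would invoke the design condition stated just before the theorem: the assignment is constructed so that no agent can simultaneously \emph{review-with} and be \emph{reviewed-by} the same agent. Concretely, I would show that for the Euler-cycle assignment there is no agent $i$ with both $i \in A(j)$ and $i \in A^{-1}(j)$ together with a shared reviewee---equivalently, the triangle configuration of Figure~\ref{fig:sp_diagram} (Left), in which $j$ reviews $i$, $i$ reviews $j$'s selection-relevant reviewees, never arises. Using the cyclic structure $a \to b \to c \to \cdots \to a$, I would argue that whenever $j$ reviews some agent $i$, the reviewers of $i$ that matter for computing $w_i$ are determined by $i$'s in-neighbourhood, which is disjoint from the set whose selection $j$ can affect through $i$'s weight. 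Hence $j$'s reviews feed only into the weights of agents whose weights in turn bear on the selection of agents \emph{other than} $j$.

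Combining these two observations---that $j \notin A^{-1}(j)$ eliminates the direct channel, and that the Euler-cycle assignment eliminates the indirect weight-laundering channel---gives that the entire acceptance test for $j$ is measurable with respect to data disjoint from $j$'s report. Therefore $j$'s selection probability is invariant under any deviation in $j$'s review, and \extpeer is strategyproof. I expect the main obstacle to be the indirect-channel step: one must argue carefully and in full generality (not just for the depicted six-cycle) that the Euler-cycle assignment forbids the problematic overlap, i.e.\ that $A(j) \cap A^{-1}(j)$ carries no shared-reviewee dependency for \emph{every} agent $j$ and \emph{every} realisation of the assignment. The direct channel and the reduction of strategyproofness to review-independence are comparatively routine; the combinatorial guarantee about the assignment graph is the crux.
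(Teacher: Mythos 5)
Your high-level decomposition matches the paper's: (i) for any fixed weights, agent $j$'s acceptance test in Algorithm~\ref{alg:extpeer} depends only on the reports of $A^{-1}(j)$, so $j$ can only manipulate indirectly, through the weights of their own reviewers; (ii) the assignment must rule out that indirect channel. But you leave step (ii) --- which you yourself identify as the crux --- unproved: you say you ``would argue'' from the cyclic structure that the problematic overlap never arises, and you never supply the argument. The missing idea is \emph{bipartiteness}. Algorithm~\ref{alg:eulerAlg} first builds a $2m$-regular \emph{bipartite} graph on $X \cup Y$ and only then orients its edges along an Euler cycle, so every directed review edge crosses the partition. Hence if $i$ reviews $j$, then (say) $i \in X$ and $j \in Y$, which forces $A(i) \subseteq Y$ and $A(j) \subseteq X$; these sets are disjoint, so $i$ and $j$ share no reviewee and $j$'s reports cannot enter $w_i$. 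That one observation is the entire content of the paper's combinatorial step; appealing to ``the cyclic structure'' alone does not yield it, and without it the proof is incomplete.

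A secondary but consequential issue is that you misstate the dangerous configuration. The triangle of Figure~\ref{fig:sp_diagram} (Left) is $a \to b$, $a \to c$, $b \to c$: it requires that $i$ reviews $j$ \emph{and} that $i$ and $j$ share a reviewee $c$; it does \emph{not} require mutual review, i.e.\ $i \in A(j)$ and $i \in A^{-1}(j)$ simultaneously, which is what you write. Relatedly, $\sigma_j(i)$ (the report \emph{on} $i$) never enters $w_i$; what enters $w_i$ are the reports $\sigma_j(c)$ for shared reviewees $c \in A(i) \cap A(j)$. These are not mere slips of notation: if you verified only the absence of ``mutual review plus shared reviewee,'' a one-directional edge $i \to j$ together with a shared reviewee would still be permitted, and that configuration already breaks strategyproofness. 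The condition you need --- and the one the bipartite construction actually delivers --- is that whenever $i$ reviews $j$, $A(i) \cap A(j) = \emptyset$.
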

\begin{proof}
As was also shown for \peer \cite{PeerNomination}, for any exogenous set of weights, Algorithm~\ref{alg:extpeer}, is strategyproof -- no agent can influence their own chance of being selected. To show that it is not possible for an agent to influence the weights given to their own reviewers, observe that, for each agent $i$, $w_{i}$ receives only the values of reviews on the same agents that agent $i$ reviews. Thus, it is not possible for an agent $j$, who does not review a paper also reviewed by agent $i$, to change agent $i$'s weight. Consider now an agent $j$ that is reviewing a paper also reviewed by agent $i$. Agent $j$ has a reason to deviate form truthful reporting if changing agent $i$'s weight will benefit the chance that agent $j$ is selected. Since the only use of agent $i$'s weight in the outcome is on agents it reviews, agent $j$ would benefit from altering agent $i$'s weight only if agent $i$ was reviewing agent $j$ themself.

Such an occurrence is prevented by Algorithm~\ref{alg:eulerAlg}. %
Thanks to the initial creation of a bi-partite graph, agent $i$ would review agent $j$ only if $i\in X$ and $j\in Y$ (or vice versa). But then there can be no third agent both of them review: agent $i$ only reviews agents in $Y$, while agent $j$ only reviews agents in $X$ (see Figure~\ref{fig:sp_diagram} (Right)). Thus, the only case where agent $j$ would find it useful to not be truthful cannot happen.
\end{proof}

\begin{algorithm}[t]
\begin{algorithmic}{}
\small
\REQUIRE Set of $n$ agents, review number $m \leq \sfrac{n}{4}$.
\ENSURE Anti-transitive $m$-regular assignment $A$ 

\STATE Initialise $G = (V, E)$ with $V:= [n]$, $E := \emptyset$
\STATE Partition $V$ into $X$ and $Y$ such that $|X| = |Y| = \sfrac{n}{2}$

\STATE \textbackslash\textbackslash Make a $2m$-regular bipartite graph G(V,E)
\FORALL{$x$ in $X$}
    \FORALL{$i$ in $1, ..., 2m$}
        \STATE $y^* \leftarrow {\arg\min}_y \{\deg(y) \mid y \in Y\} $
        \STATE $E \leftarrow E \cup \{\{x, y^*\}\}$
    \ENDFOR
\ENDFOR
\STATE \textbackslash\textbackslash Use Hierholzer’s algorithm to find an Euler cycle
\STATE \textbackslash\textbackslash Euler cycle exists since every node has even degree
\STATE Set euler\_cycle = \emph{Hierholzers\_algorithm}(G)
\STATE Set $A := ([n], E_A)$, $E_A := \emptyset$
\FORALL{$(v_i, v_{i+1})$ in euler\_cycle}
    \STATE $E_A \leftarrow E_A \cup \{(v_i, v_{i+1})\}$
\ENDFOR
\RETURN A
\end{algorithmic}
\caption{\textsc{Euler-Based Assignment}}
\label{alg:eulerAlg}
\end{algorithm}{}

\subsection{A Simple Theoretical Model} \label{sec:theory}

The constructive use of Algorithm~\ref{alg:extpeer} depends on the ability of identifying accurate and inaccurate reviewers, and using this identification to weigh their reviews. Clearly, lacking the knowledge of the ground truth means any identification of accurate/inaccurate agents has to depend on comparing agents to the reviews of the other agents. If all agents were accurate no reweghting of agents would be needed, but as the proportion of accurate agents drops the problem becomes more difficult. Still, when a large majority of them are accurate, the correct opinion is usually the majority, as inaccurate agents will give random rankings.
However, if the number of accurate agents is very low, or other agents are actively malicious, identification becomes impossible, as finding a metric to evaluate the agents against becomes infeasible.

\begin{figure}[t]
 \centering
 \includegraphics[width=0.7\linewidth]{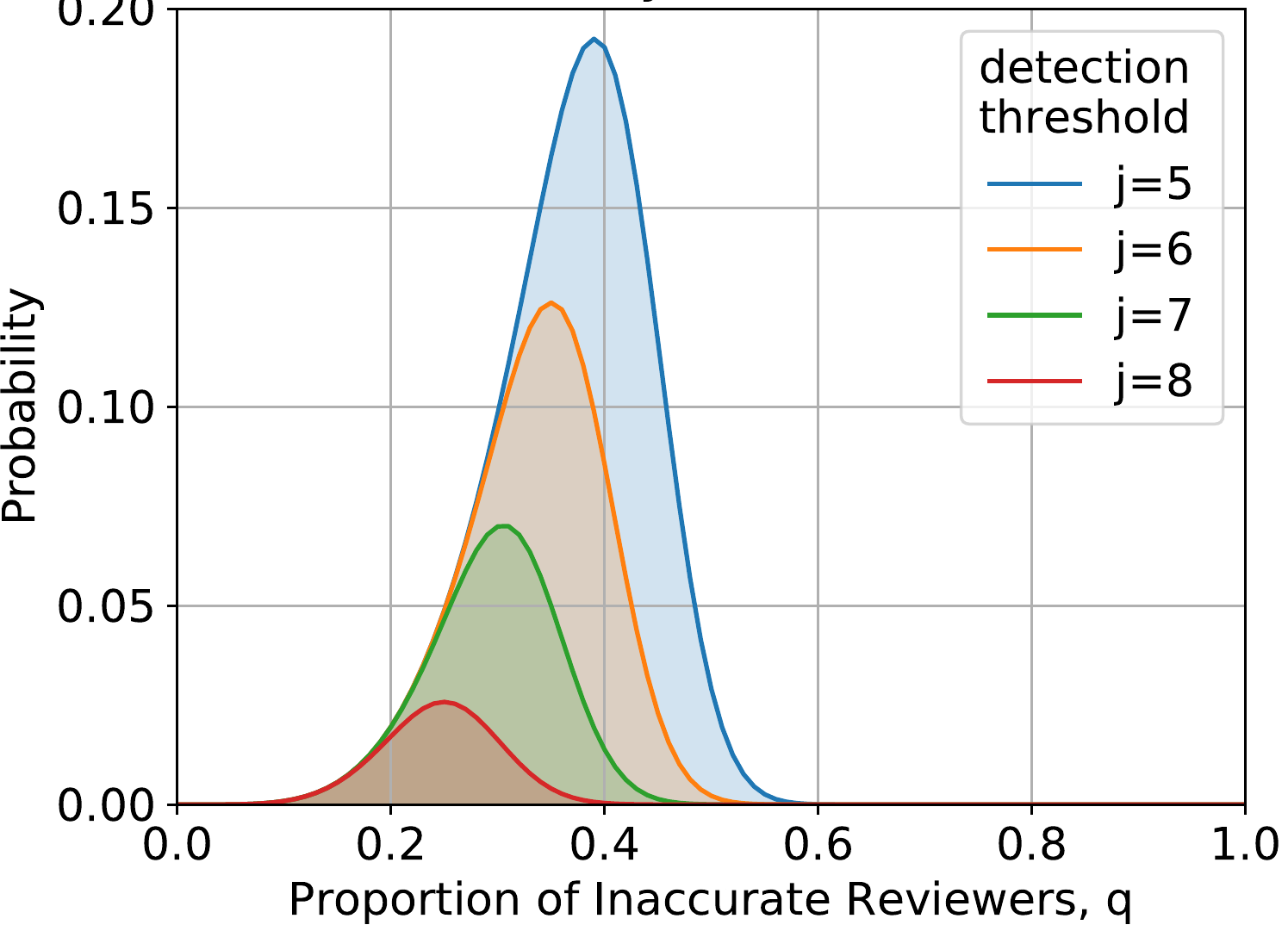}
 \caption{The probability of identifying an inaccurate agent, when $m=9$, and the threshold for identification is $j$.}
 \label{fig:ge_prob}
\end{figure}

To provide some intuition to the conceptual underpinnings of our algorithm we now present a simplified setting, and show how our algorithm -- even with a very simple, conservative, weighting scheme -- is still able to improve over \peer. We start with an $m$-regular assignment, where each agent has one of two types: $\mathcal{A}$, meaning the agent is an accurate reviewer; or $\overline{\mathcal{A}}$, meaning the agent is inaccurate. Recall that in \peer an agent is selected if a majority of their reviewers approve. We show that a very simple dynamic weighting scheme, only relying on knowing how many times an agent has been in a minority, has a good chance of flipping a decision made by $\overline{\mathcal{A}}$ agents to one made by $\mathcal{A}$ agents, improving on \peer. %

\begin{figure*}[htp]
    \begin{subfigure}{0.45\linewidth}
        \centering
        \includegraphics[width=\linewidth]{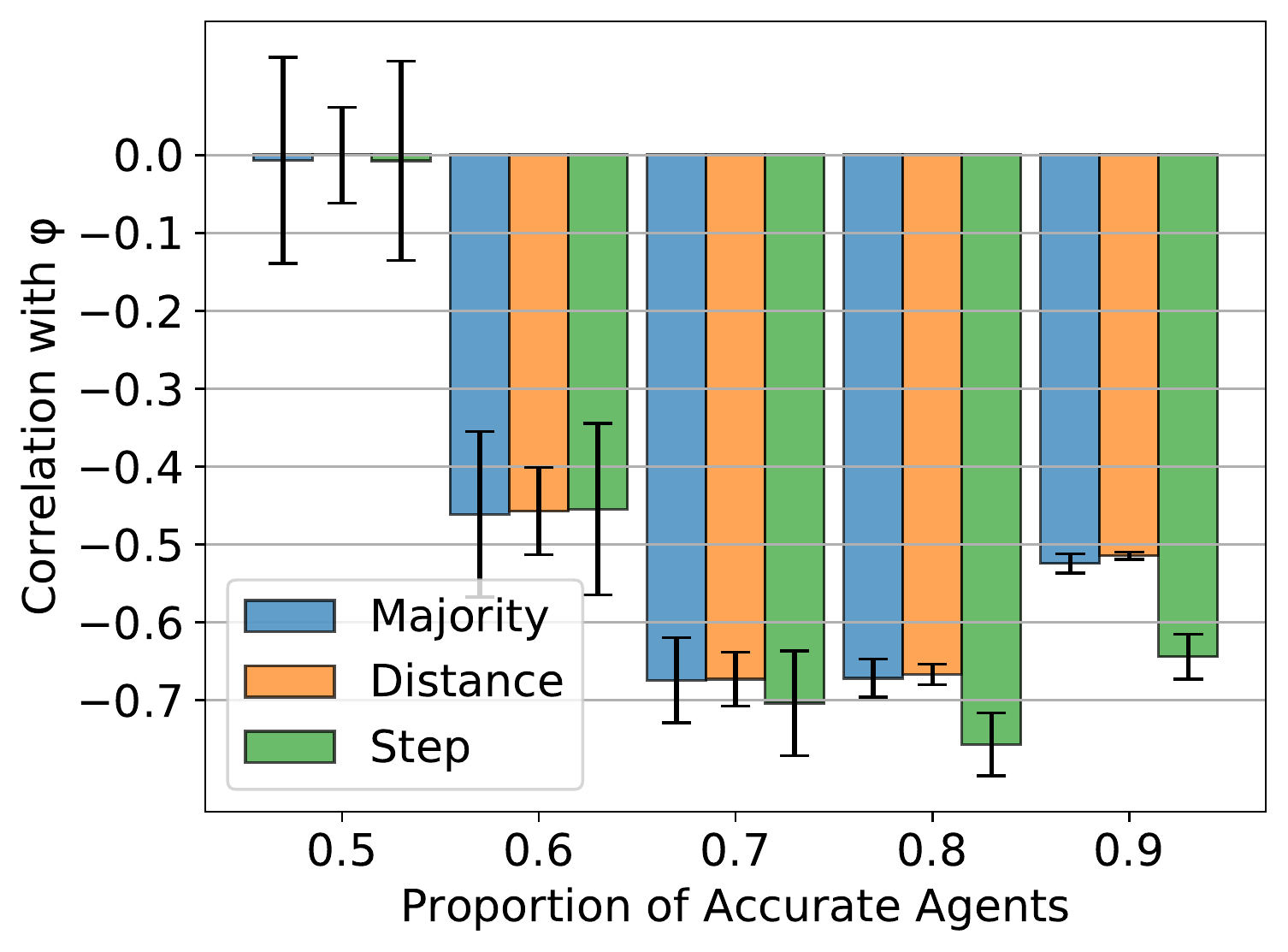}
        \caption{Agents produced from Mallows distribution with $\varphi=0.5$ for the share shown, the rest with $\varphi=1$.}
    \label{fig:correlation-a}
    \end{subfigure}
    \hspace{0.05\linewidth}
    \begin{subfigure}{0.45\linewidth}
        \centering
        \includegraphics[width=\linewidth]{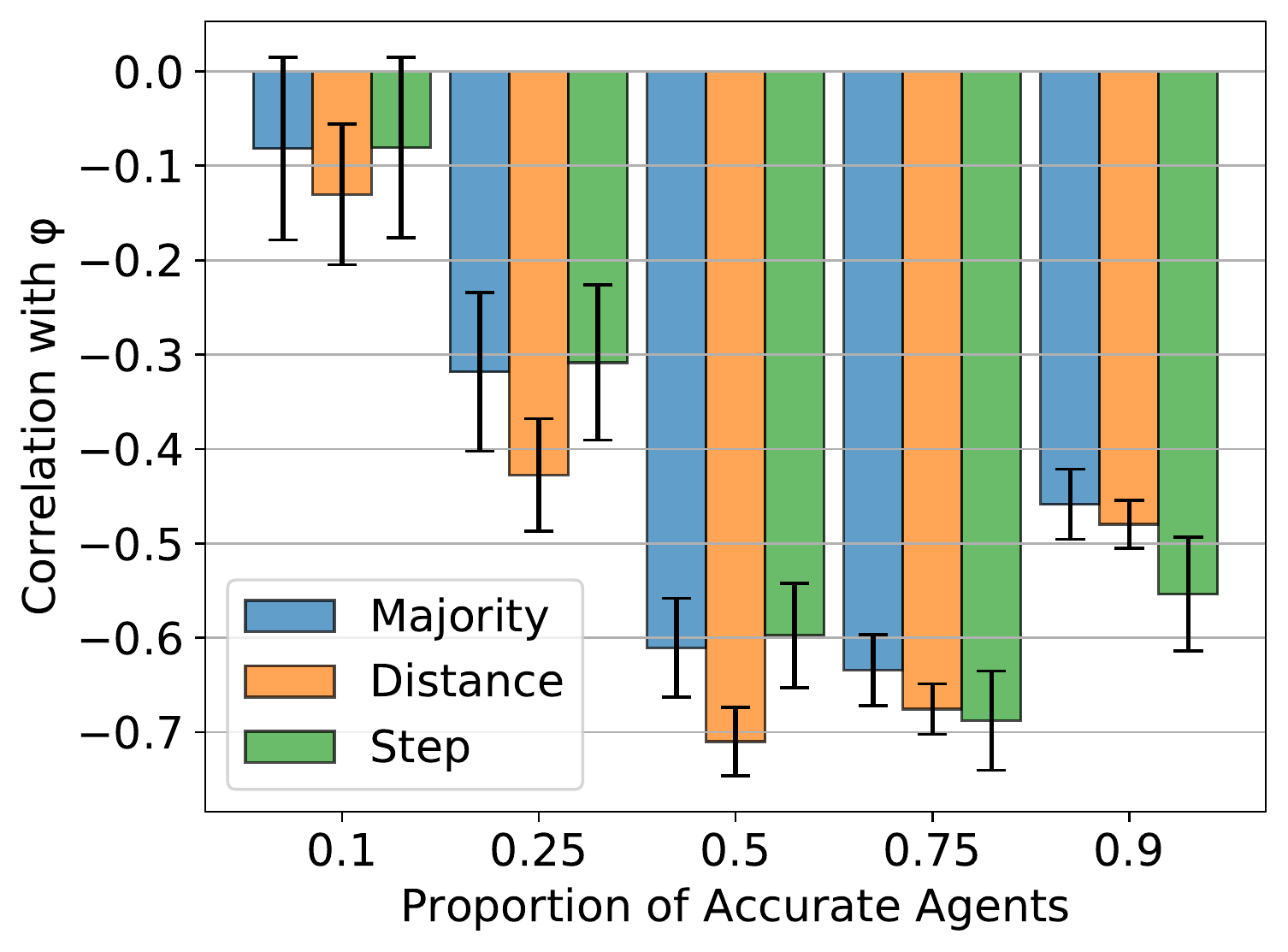}
        \caption{Agents produced from Mallows distribution with $\varphi=0.8$ for the share shown, the rest with $\varphi=1.2$.}
        \label{fig:correlation-b}
    \end{subfigure}
    \caption{Spearman correlation of the weights from the different weighting schemes with the underlying $\varphi$ of each agent. The bars represent the mean and the standard deviation over 1000 simulations of the weights.}
         \label{fig:correlation}
\end{figure*}

Let an $\overline{\mathcal{A}}$-agent be {\em identified as inaccurate} if they held the minority opinion in at least $j$ reviews.
We want to find the probability of the following event: (1) An agent's reviewers have an $\overline{\mathcal{A}}$-majority and (2) Enough of the $\overline{\mathcal{A}}$-agents of that majority are {identified as inaccurate}. 

Given the noise model, let $q$ be the probability for an agent to be of type $\overline{\mathcal{A}}$. %
Then the probability that an agent is reviewed by a majority of $\overline{\mathcal{A}}$ agents and that majority is of size $k$ is: 
\begin{equation*}
\footnotesize
 q_{\overline{\mathcal{A}},k} := \mathbb{P}[\overline{\mathcal{A}}\textrm{-majority of size }k] = {m \choose i}q^i(1-q)^{m-i}
\end{equation*}
where $i = \floor{\sfrac{m}{2}}+ k$. In such a case we would like to identify at least $k$ of the $\overline{\mathcal{A}}$ agents in order to nullify their votes.

We also find the probability that $\overline{\mathcal{A}}$-agents have a majority of any size:
\begin{equation*}
\footnotesize
 q_{\overline{\mathcal{A}}} := \mathbb{P}[\overline{\mathcal{A}}\textrm{-majority}] = \sum_{k=1}^{\lceil m/2 \rceil} q_{\overline{\mathcal{A}},k} = \sum_{i=0}^{\lfloor m/2 \rfloor} {m \choose i} (1-q)^i q^{m-i}
\end{equation*}

Our simple weighting algorithm identifies a $\overline{\mathcal{A}}$-agent if they are in minority for at least $j$ of their other reviewed papers. The probability of this event, $q_{\textrm{det}}$, is given by the cumulative binomial probability, keeping in mind that the probability of $\mathcal{A}$-majority in a set of reviews on an agent is conditioned on the fact that they contain at least one $\overline{\mathcal{A}}$-agent:
\small
\begin{alignat*}{2}
\small
 & q_{\mathcal{A}} &&:=\mathbb{P}[\mathcal{A}\textrm{-majority } | \textrm{ there is at least one }\overline{\mathcal{A}}] \\ 
 & &&=\sum_{i=0}^{\lfloor m/2 \rfloor} {m-1 \choose i} (1-q)^i q^{m-1-i}\\
 \Rightarrow & q_\textrm{det} &&= \sum_{i = j}^{m-1} {m-1 \choose i} (q_{\mathcal{A}}^\prime)^i (1-q_{\mathcal{A}}^\prime)^{m-1-i}
\end{alignat*}
\normalsize
Notice that $\overline{\mathcal{A}}$ may be not just inaccurate but adversarial, in which case we could flip their review and only need to do so for $k$ of them. However, we take the safer approach here, which means we need to detect at least $2k$ $\overline{\mathcal{A}}$-agents to correct the decision. The probability of this correcting event is given by the following expression:
\small
\begin{multline*}
 \mathbb{P}[\textrm{correction event}] = \sum_{k=1}^{\lceil m/2 \rceil} q_{\overline{\mathcal{A}},k} \cdot \left(\sum_{i = 2k}^{\lfloor m/2 \rfloor + k} {\lfloor m/2 \rfloor + k \choose i} q_\textrm{det}^i (1-q_\textrm{det})^{\lfloor m/2 \rfloor + k-i} \right)
\end{multline*}
\normalsize
This produces the desired result, empirically shown in Figure \ref{fig:ge_prob}. As can be seen, for a wide variety of $q$ and $j$, even our very conservative weighting scheme produces a nice probability of improving some reviews. As we shall see with the weighting schemes below, examined by our simulations, even better results can be achieved.

It should be noted that one could produce analogous probabilities of the weighting scheme incorrectly identifying $\mathcal{A}$-agents as $\overline{\mathcal{A}}$-agents. However, for a large enough $j$ (say, $j\geq \sfrac{m}{2}$), and a majority of $\mathcal{A}$ agents (i.e., $q<\sfrac{n}{2}$), this number will always be smaller, i.e., the benefit from the reweghting will be positive.

\begin{figure*}[t]
\centering
    \begin{subfigure}{\linewidth}
        \centering
        \includegraphics[width=\linewidth]{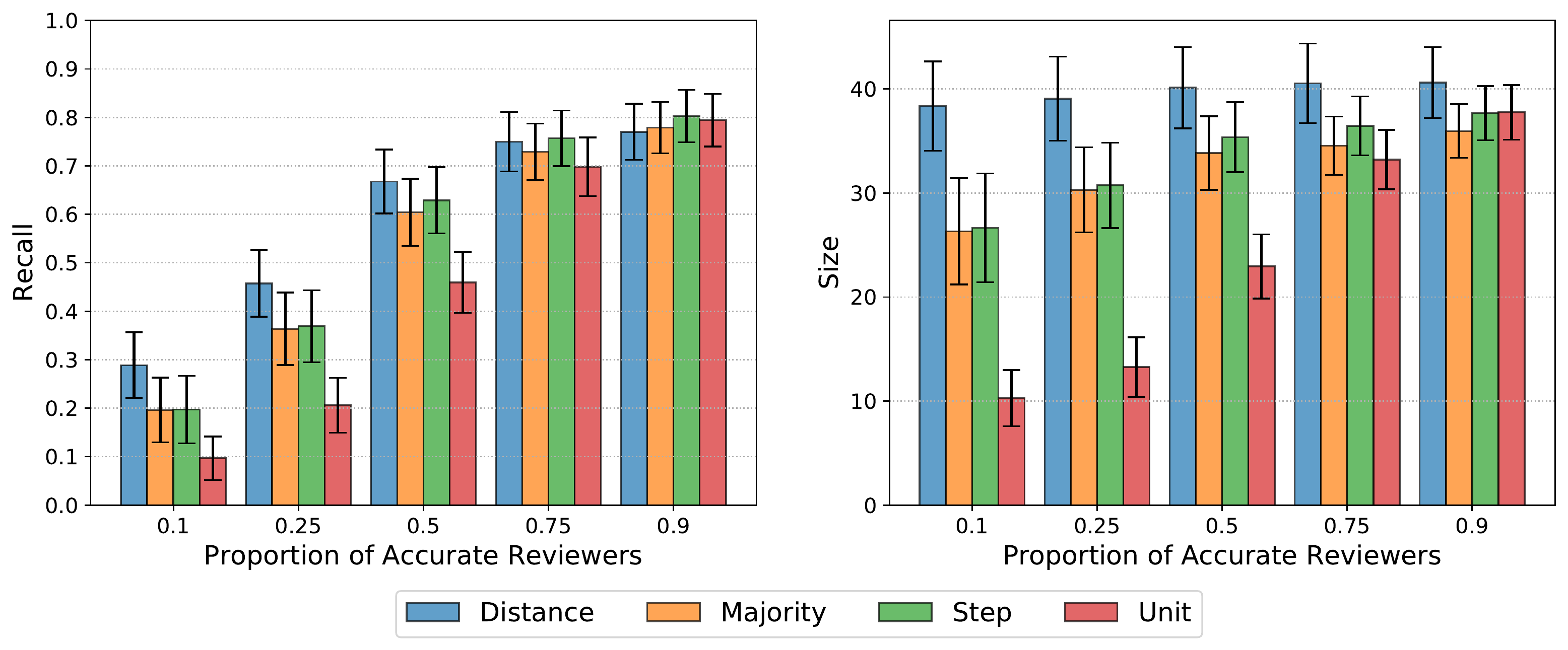}
        \caption{Population's  $\varphi\in\{0.5,1\}$.}
        \label{fig:rnd_results}
    \end{subfigure}

    \begin{subfigure}{\linewidth}
        \centering
        \includegraphics[width=\linewidth]{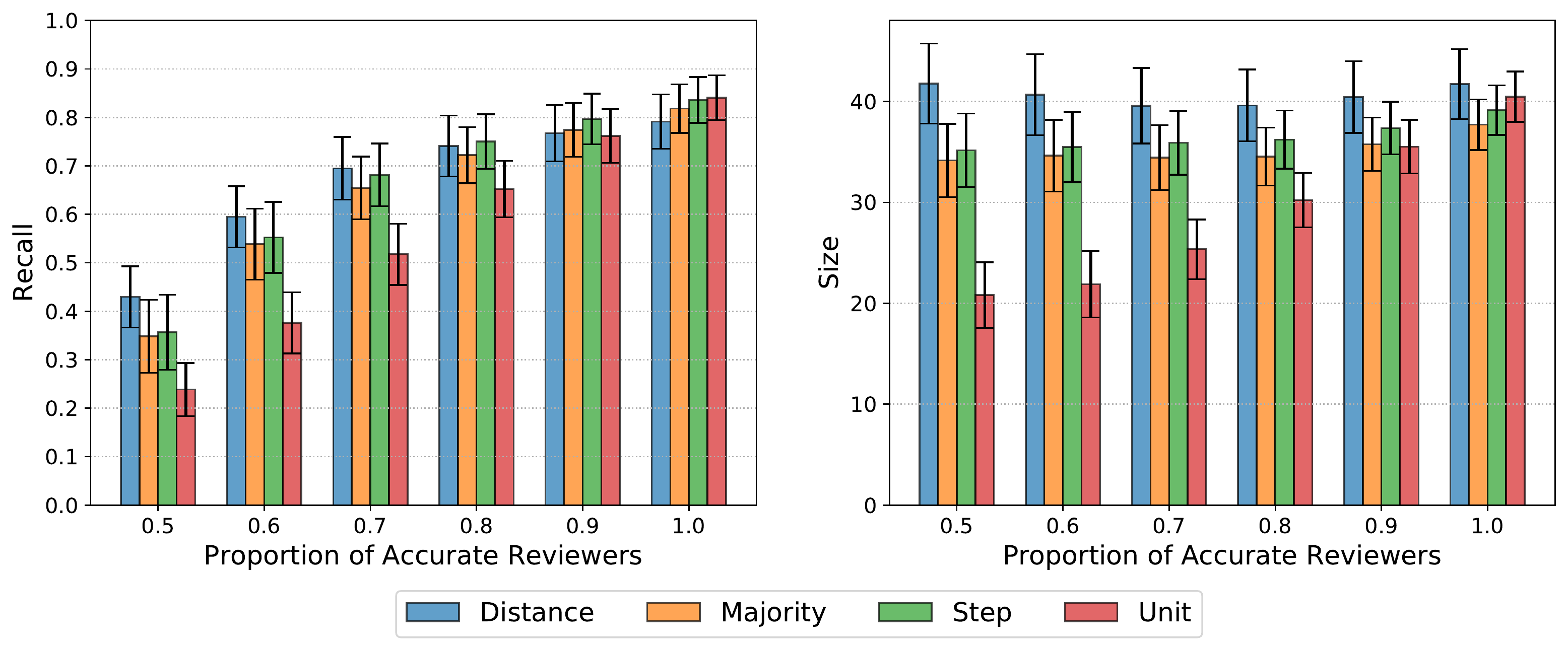}
        \caption{Population's $\varphi\in\{0.8, 1.2\}$.}
        \label{fig:adv_results}
    \end{subfigure}
    \caption{Size and recall with $n=200, m=7, k=40$ with noisy reviewers (a) and adversarial reviewers (b).}
\end{figure*}

\subsection{Weighting Schemes}\label{sec:weightSchemes}

We present three main weighting schemes to evaluate the reliability of the reviewers, each based on the approval vote supplied to the algorithm, maintaining strategyproofness.

\paragraph{Distance.} Distance directly computes the distance between the agent's review and those of other reviewers by simply averaging the individual differences between the reviewers. Formally, define the average distance of reviewer $i$ to other reviewers as $d_i = \frac{1}{m^2} \sum_{j \in A(i)} \sum_{l \in A^{-1}(j)}  |\sigma_i^\textrm{app}(j) - \sigma_l^\textrm{app}(j)| $. Then the distance weight is $w_i^\textrm{dist} = (1 - d_i)^\gamma$, where $\gamma$ is an ``aggression" parameter that exaggerates for better discrimination between the agents.

\paragraph{Majority Errors.} Let an approval of a reviewee be an ``error" by a reviewer if it is a minority opinion, summing fractional nominations. %
Let
\small
\begin{equation*}
    \textrm{maj}_j^\sigma = 
    \begin{cases}
        1,              & \text{if } \sum_{i \in A^{-1}(j)} \sigma_i^\textrm{app}(j) \geq m/2\\
        0,              & \text{otherwise}
    \end{cases}
\end{equation*}
\normalsize
be the majority for agent $j$ in profile $\sigma$. Then define the number of errors of reviewer $i$ to be 
$\textrm{err}_i^\sigma = \sum_{j \in A(i)} \mathds{1}_{\sigma_i^\textrm{app}(j) \neq \textrm{maj}_j^\sigma}$. Lastly, the weight is defined as $w_i^\textrm{majerr} = 1 - \delta(\textrm{err}_i^\sigma/m)$, where $\delta$ is an aggression parameter.

\paragraph{Step.} Step applies a step function to the error rate $\textrm{err}_i^\sigma/m$ defined above. We choose two thresholds, $t_1$ and $t_2$ such that if the error rate reaches $t_1$, we reduce the weight of the reviewer to $0.5$; if the error rate reaches $t_2$, we reduce the weight to 0. Additionally, we scale each threshold by the nomination quota as it plays a bigger role in error detection than just the size of the review pool, $m$. Formally,
\begin{equation*}
    w_i = \begin{cases}
    1,              & \text{if } \textrm{err}_i^\sigma/\frac{n}{k} < t_1\\
    0.5,            & \text{if } t_1 \leq \textrm{err}_i^\sigma/\frac{n}{k} < t_2\\
    0,              & \text{otherwise}.
    \end{cases}
\end{equation*}
\normalsize

\section{Setup and Performance Results}\label{sec:experiments}

We test \extpeer, using our weighting schemes, against \peer, following the testing framework used in \cite{PeerNomination} and \cite{Aziz2019}. However, our setup differs in a few significant ways. In contrast to the previous experiments, we do not include any partition-based mechanisms, thus allowing us to use the assignment-generating procedure given in Algorithm \ref{alg:eulerAlg} without worrying about forming a given number of clusters (the procedure does generate a 2-clustering).

To simulate the presence of noise, we use Mallows model, as described in Section \ref{sec:mallows}. We assume that the population consists of accurate and inaccurate reviewers, which we represent by using two distinct values of the dispersion parameter $\varphi$. In Section \ref{sec:random} we assume that the accurate reviewers have $\varphi=0.5$ and the inaccurate reviewers have $\varphi=1$, i.e., their reviews are random. We vary the proportion of accurate reviewers between 0.1 and 0.9. Also recall that in Section \ref{sec:mallows} we extended the definition of the dispersion parameter to range between $0$ and $2$. In Section \ref{sec:adversarial}, we set $\varphi=0.8$ for the accurate reviewers and $\varphi=1.2$ for the adversarial reviewers, i.e., most of their reviews will contradict the ground truth. We vary the proportion of accurate reviewers between $0.5$ and $1$. Note that as the proportion of accurate reviewers drops to 0.5 and lower, it becomes impossible to recover the ground truth.

Figure \ref{fig:correlation} demonstrates that our metrics strongly correlate with the underlying $\varphi$ in both setups, demonstrating their effectiveness in singling out accurate reviewers.

\subsection{Inaccurate Reviewers} \label{sec:random}

\sloppy Figure \ref{fig:rnd_results} shows %
that a high (90\%) proportion of accurate reviewers results in %
barely any improvement over Unit (original \peer). This indicates that the weighting schemes do not overfit in the search for non-existing noise. %
 However, as the proportion of random reviewers rises, all weighting schemes outperform Unit.%

We also see that the weighting schemes are indeed much better at keeping the output size close to the desired $k$ with Distance keeping the output size consistent across all levels of noise. In addition, the much greater recall of Distance and other weighting schemes indicates the additional selected agents are usually the deserving ones. 
Between the weighting schemes, Distance manages to gain more and more advantage as the noise levels increase as it is the most fine-grained and aggressive one. This allows it to both identify the inaccurate reviewers and maintain consistent output size.%

There are similar patterns for other settings of parameters. As $k$ increases, the performance of all algorithms improves as not only can they be less selective, but the reviewers provide more data. The advantage over Unit, however, decreases as \peer is capable of taking advantage of this as well. Increasing $m$, on the other hand, primarily benefits the weighting schemes as they can calculate the weights more accurately, increasing their advantage over Unit.

\subsection{Adversarial Reviewers} \label{sec:adversarial}

We see a similar story with the adversarial model. Again, at low noise levels, \peer, as expected, performs well while the weighting schemes' slight drop in performance can be attributed to overfitting. However, as the proportion of adversarial reviewers rises, we can observe a notable increase in performance for the weighting schemes. Even when half of the population is adversarial, Distance impressively achieves recall of over 40\% as compared to the theoretical max of 50\%.

\section{Conclusion}

In this paper we propose a novel strategyproof peer selection algorithm -- \extpeer, which weighs reviewers based on their perceived accuracy. The basis for this reweghting is the observation that in most cases, one's reviews are correlated in quality. %
We develop several weighting methods, showing that even straightforward ones can reach high quality outcomes, with high level of noise. Our algorithm is constructed in a modular way, allowing for a variety of weighting and evaluation methods to be developed for particular settings or noise models. This modularity allows for multiple directions of future development, including developing additional weighting; matching weighting schemes to noise models; and optimising them using various techniques.

\newpage

\footnotesize
\bibliographystyle{splncs04}
\bibliography{peer}

\end{document}